\renewcommand\footnotetextcopyrightpermission[1]{} 
\newcommand{\bi}{\begin{itemize}}
\newcommand{\ei}{\end{itemize}}
\newcommand{\beq}{\begin{equation}}
\newcommand{\eeq}{\end{equation}}
\newcommand{\bqn}{\begin{eqnarray*}}
\newcommand{\eqn}{\end{eqnarray*}}
\newcommand{\ba}{\begin{array}}
\newcommand{\ea}{\end{array}}
\newcommand{\bs}{\begin{small}}
\newcommand{\es}{\end{small}}
\newcommand{\nn}{\nonumber}
\DeclareMathOperator*{\argmax}{arg\,max}
\begin{document}
\title{A Coupon-Collector Model of Machine-Aided Discovery}
\titlenote{This work was supported in part by the IBM-Illinois Center for Cognitive Computing Systems Research (C3SR), a research collaboration as part of the IBM Cognitive Horizons Network.}

\author{Aditya Vempaty}
\affiliation{%
  \institution{IBM T.~J.\ Watson Research Center}
  \streetaddress{1101 Kitchawan Road}
  \city{Yorktown Heights} 
  \state{New York} 
}
\email{avempat@us.ibm.com}

\author{Lav R. Varshney}
\affiliation{%
  \institution{Department of Electrical and Computer Engineering,}
  \streetaddress{University of Illinois at Urbana-Champaign,}
  \city{Urbana} 
  \state{Illinois} 
}
\email{varshney@illinois.edu}

\author{Pramod K. Varshney}
\affiliation{%
  \institution{Department of Electrical Engineering and Computer Science,}
  \streetaddress{Syracuse University,}
  \city{Syracuse} 
  \state{New York}}
\email{varshney@syr.edu}

\begin{abstract}
Empirical studies of scientific discovery---so-called Eurekometrics---have indicated that the output of exploration proceeds as a logistic growth curve. Although logistic functions are prevalent in explaining population growth that is resource-limited to a given carrying capacity, their derivation do not apply to discovery processes.  This paper develops a generative model for logistic \emph{knowledge discovery} using a novel extension of coupon collection, where an explorer interested in discovering all unknown elements of a set is supported by technology that can respond to queries. This discovery process is parameterized by the novelty and quality of the set of discovered elements at every time step, and randomness is demonstrated to improve performance. Simulation results provide further intuition on the discovery process.
\end{abstract}

\keywords{probabilistic modeling, generalized coupon collector's problem, knowledge discovery}

\maketitle

\section{Introduction}
\label{sec:mot}
Eurekometric studies quantify the process of discovery and have found that the empirical number of scientific discoveries as a function of time initially increases exponentially before saturating to a \emph{capacity} limit according to an approximate logistic curve \cite{ArbesmanL2010,Arbesman2011,ArbesmanC2011}.  This is seen in the discovery of mammalian species, chemical elements, and minor planets \cite{Arbesman2011}.  The logistic curve is often used to describe population growth that is resource-limited to a given carrying capacity \cite{Sayre2008}, but population growth dynamics do not seem to describe scientific discovery.  This paper develops a generative model for discovery that recreates empirical observations, by generalizing the coupon collector model \cite{Kobza2007,Boneh1997,Read1998,BerenbrinkS2009}.  In particular, the units of discovery (species, elements, planets) are the coupons and an explorer is trying to collect (discover) them all.

There are two main generalizations introduced into the standard coupon collector formulation.

Scientific discovery problems can be cast as big-D discovery,\footnote{This terminology is inspired by the terms \emph{little-c creativity} which means novelty with respect only to the mind of the individual concerned and \emph{big-C Creativity} which means novelty with respect to the whole of previous history or eminent creativity \cite{Boden2004}.} where there is also growing interest in developing artificial intelligence (AI) support \cite{Valdes-Perez1999, Langley2000,King2009}.  One can, however, just as easily think about little-d knowledge discovery problems such as looking for restaurants on Yelp,\footnote{http://www.yelp.com/} where one might `discover' a new restaurant.  In addition to being novel, the restaurant should also be of high quality and have good ratings.  In the sequel, we will therefore evaluate the discovery problem using notions of both novelty and quality, e.g.\ as judged by a suitably knowledgeable social group \cite{Sawyer2012}. 

In the economics, cognitive psychology, and AI literatures, problem solving is described in terms of searching a problem space, where the first step is discovery to determine available elements in the space \cite{Anderson2010}. In rationality-based economic theory, the set of possible alternatives is known \cite{MarchS1958}, but under the bounded rationality framework \cite{MarchS1958,Simon1982}, the set of alternatives is not known \emph{a priori}. Thus limited agents must perform discovery, perhaps with technological support to overcome limitations.\footnote{In computational creativity, a technology supporting people in determining the set of alternatives is called a \emph{coach} \cite{Lubart2005}.}  In scientific discovery, new technologies for separating compounds into constituent parts enable discovery of new chemical elements and new telescopes enable discovery of new minor planets, for example.  In the sequel, we will consider discovery as supported by a technology, which may itself be noisy. 

\section{Problem Formulation}
\label{sec:prob}
The discovery problem is modeled as follows. Let the universe of $M$ total elements be denoted $\Theta$. The explorer is initially only aware of a subset $\Theta_0\subset\Theta$ of these $M$ possibilities. She learns of other elements in steps, where at each step she learns of a new object. In taking the step, a previously unknown object is drawn from the $M$ possible values according to the prior probability mass function (pmf) $\mathbf{p}=[p_1,\ldots,p_M]$, but the supporting technology yields noisy observations of which kind of object it is. The explorer updates her element set accordingly. Note that the technology presents its outcome to the explorer without any information regarding the initial known subset of elements $\Theta_0$. In particular, let the underlying element $\theta_t$ be investigated at time step $t$. The technology makes an observation $x_t^M$ corrupted by noise, which is i.i.d.\ across time, and infers the element to be $\hat{\theta}_t\in\Theta$ as follows: $$\hat{\theta}_t=\argmax_{\theta\in\Theta}p(\theta|x_t^M),$$
where $p(\theta|x_t^M)$ is the posterior distribution given the observation. The explorer's updated element set is $\Theta_t=\Theta_{t-1}\cup\hat{\theta}_t$. Note that a new element gets added only if it is not already known by the explorer, who can access the true value. Fig.~\ref{fig:sys} summarizes this technology-aided discovery process.

Each new element has two basic characteristics: novelty and quality. These are detailed in turn in the next sections, in terms of the size and total quality of the discovered set.

\begin{figure}
  \centering
    \includegraphics[height=1.1in]{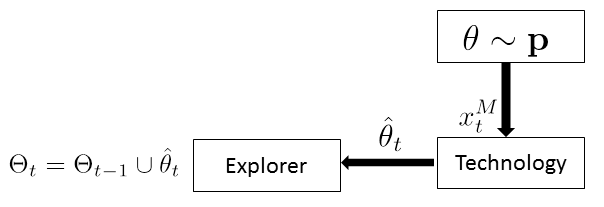}
    \caption{Single iteration of the discovery process, where $\Theta_{t-1}\subset\Theta$ is known at $t$, technology provides a noisy estimate $\hat{\theta}$ of unknown element $\theta$, and $\hat{\theta}$ is discovered by the explorer if not already known.}
    \label{fig:sys}
\end{figure}

\section{Size of Knowledge Base}
\label{sec:novel}
Size is measured by the total number of distinct elements that are known to the explorer at any given time. This includes both the prior set of elements known before the start of the discovery process and the elements discovered during discovery. Let $N_t=|\Theta_t|$. Clearly the learning rate depends on the pmf $\mathbf{p}$, the initial set $\Theta_0$, and the noise distribution. While performance is best for uniform prior when the initial knowledge set $\Theta_0$ is arbitrary, for cases when $\Theta_0$ contains rare elements, knowledge discovery is fast since the unknown elements are of high probability and can be quickly found by the explorer. This intuition is also observed in our analytical results. Also, for noisy observations, it is possible that higher variance yields faster total discovery as noise can cause false decisions in the technology aiding the discovery of new elements (typically of low probability) by the explorer.

Let the misclassification probabilities due to noise-corrupted observations at any time $t$ be $r_{mn}\triangleq \Pr(\hat{\theta}_t=n|\theta_t=m)$ for $m,n=1,\ldots,M$. Therefore, the posterior distribution $\mathbf{\tilde{p}}=[\tilde{p}_1,\ldots,\tilde{p}_M]$ of the elements is the new prior distribution of the explorer's input. An asymmetric channel $\mathbf{R} \triangleq \{r_{mn}\}$ represents the difficulty of identifying elements. As has been noted in empirical studies, some elements may be more difficult to identify/discover than others \cite{ArbesmanL2010,Arbesman2011,ArbesmanC2011}.

\begin{proposition}
\label{prop:gen}
For the technology-aided discovery problem with a pmf $\mathbf{p}$ and noise channel $\mathbf{R}$, the expected number of elements learned after $T$ time steps is 
\begin{equation}
E[N_T|\Theta_0]=|\Theta_0|-\sum_{k=1}^T\sum_{\theta\notin\Theta_0}(-1)^k{T \choose k}\tilde{p}_\theta^k.
\label{eq:final}
\end{equation}
\end{proposition}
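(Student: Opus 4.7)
The plan is to reduce the problem to a union bound on indicator variables and then invoke the binomial theorem on the one-minus-probability of non-appearance. The key modeling observation is that because the technology in Fig.~\ref{fig:sys} ignores $\Theta_0$ when producing its MAP output $\hat{\theta}_t$, and because both the latent element $\theta_t$ and the noise are i.i.d.\ across $t$, the sequence $\hat{\theta}_1,\ldots,\hat{\theta}_T$ is i.i.d.\ with common marginal pmf $\tilde{\mathbf{p}}$. This is the analogue of the standard coupon-collector draws, except that the pmf is the channel-induced $\tilde{\mathbf{p}}$ rather than $\mathbf{p}$.

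First I would write the size of the knowledge base as a sum of indicators:
\begin{equation*}
N_T=|\Theta_0|+\sum_{\theta\notin\Theta_0}\mathbb{1}\bigl\{\theta\in\{\hat{\theta}_1,\ldots,\hat{\theta}_T\}\bigr\},
\end{equation*}
since $\Theta_T=\Theta_0\cup\{\hat{\theta}_1,\ldots,\hat{\theta}_T\}$ and the only contribution beyond $|\Theta_0|$ comes from previously unknown $\theta$'s that appear at least once in the technology's output stream. Taking expectations and using linearity reduces the claim to computing, for each $\theta\notin\Theta_0$, the probability that $\theta$ appears in the output at least once.

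Second, by the i.i.d.\ structure above, the probability that $\theta$ never appears in $T$ draws is $(1-\tilde{p}_\theta)^T$, so the probability of at least one appearance is $1-(1-\tilde{p}_\theta)^T$. Expanding with the binomial theorem,
\begin{equation*}
1-(1-\tilde{p}_\theta)^T=1-\sum_{k=0}^{T}\binom{T}{k}(-1)^k\tilde{p}_\theta^{\,k}=-\sum_{k=1}^{T}(-1)^k\binom{T}{k}\tilde{p}_\theta^{\,k}.
\end{equation*}
Summing this over $\theta\notin\Theta_0$ and adding $|\Theta_0|$ produces \eqref{eq:final} after swapping the finite sums.

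There is no real obstacle in the calculation; the only subtlety that merits explicit comment is the i.i.d.\ reduction in the first step. One might worry that conditioning on $\Theta_0$ (or on the event that $\hat{\theta}_t$ is newly discovered) distorts the distribution of later $\hat{\theta}_s$, as happens in set-dependent sampling schemes. However, because the technology has no access to $\Theta_0$ and the latent-plus-noise pair $(\theta_t,x_t^M)$ is refreshed independently each step, the marginal law of $\hat{\theta}_t$ is $\tilde{\mathbf{p}}$ unconditionally, and the joint law factorizes. Once this is pinned down, the rest is a one-line binomial expansion.
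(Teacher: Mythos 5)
Your proof is correct, but it takes a genuinely different route from the paper's. The paper works recursively: it sets up the Markov chain $N_t = N_{t-1}$ with probability $\sum_{\theta\in\Theta_{t-1}}\tilde{p}_\theta$ (else $N_{t-1}+1$), computes $E[N_t\mid\Theta_{t-1}]$, then peels back one more step to get $E[N_t\mid\Theta_{t-2}] = |\Theta_{t-2}| + 2\sum_{\theta\notin\Theta_{t-2}}\tilde{p}_\theta - \sum_{\theta\notin\Theta_{t-2}}\tilde{p}_\theta^2$ using the conditional law of the newly added element, and finally asserts the general-$T$ pattern with ``continuing further.'' You instead decompose $N_T - |\Theta_0|$ into per-element indicators of the event that $\theta$ appears among $\hat{\theta}_1,\ldots,\hat{\theta}_T$, use the i.i.d.\ structure of the technology's outputs to get the non-appearance probability $(1-\tilde{p}_\theta)^T$, and finish with the binomial theorem --- the classical ``expected number of distinct coupons'' computation. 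Your route is arguably cleaner and more complete: it replaces the paper's unproven inductive step with a one-line expansion, and it makes explicit the modeling point (the technology ignores $\Theta_0$, so the $\hat{\theta}_t$ are unconditionally i.i.d.\ with marginal $\tilde{\mathbf{p}}$) that the paper's Markov-chain transition probabilities implicitly assume. What the paper's recursion buys in exchange is visibility into the step-by-step dynamics of the discovery process, which supports the Markov-chain framing used elsewhere in the text; your argument gets the closed form for $E[N_T\mid\Theta_0]$ more directly but says nothing about the intermediate conditional expectations.
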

\begin{proof}
The number of elements are updated as follows
\begin{equation}
N_t=
\begin{cases}
N_{t-1}&\text{if $\hat{\theta}_t\in\Theta_{t-1}$}\\
N_{t-1}+1&\text{if $\hat{\theta}_t\notin\Theta_{t-1}$},
\end{cases}
\end{equation}
resulting in a Markov Chain:
\begin{equation}
N_t=
\begin{cases}
N_{t-1}&\text{with probability $\sum_{\theta\in\Theta_{t-1}}\tilde{p}_{\theta}$}\\
N_{t-1}+1&\text{with probability $1-\sum_{\theta\in\Theta_{t-1}}\tilde{p}_{\theta}$}.
\end{cases}
\end{equation}
Therefore, 
\begin{eqnarray}
E[N_t|\Theta_{t-1}]&=&|\Theta_{t-1}|\sum_{\theta\in\Theta_{t-1}}\tilde{p}_{\theta}+(|\Theta_{t-1}|+1)(1-\tilde{p}_{\theta})\nn\\
&=&|\Theta_{t-1}|+1-\sum_{\theta\in\Theta_{t-1}}\tilde{p}_{\theta}.
\end{eqnarray}
For $t=1$, 
\begin{equation}
E[N_1|\Theta_0]=m_0+1-\sum_{\theta\in\Theta_0}\tilde{p}_{\theta},
\end{equation}
and for $t=2$, 
\begin{equation}
E[N_2|\Theta_1]=N_1+1-\sum_{\theta\in\Theta_1}\tilde{p}_{\theta}=N_1+1-\sum_{\theta\in\Theta_0}\tilde{p}_{\theta}-\sum_{\theta\in\Theta_1\backslash\Theta_0}\tilde{p}_{\theta}.
\label{eq:t=2}
\end{equation}
Recursively, we have for $t\geq2$,
\begin{eqnarray}
&E[N_t|\Theta_{t-2}]=E\left[N_{t-1}+1-\sum_{\theta\in\Theta_{t-2}}\tilde{p}_{\theta}-\sum_{\theta\in\Theta_{t-1}\backslash\Theta_{t-2}}\tilde{p}_{\theta}\Big|\Theta_{t-2}\right]\nn\\
&=|\Theta_{t-2}|+2\sum_{\theta\notin\Theta_{t-2}}\tilde{p}_{\theta}-E\left[\sum_{\theta\in\Theta_{t-1}\backslash\Theta_{t-2}}\tilde{p}_{\theta}\Big|\Theta_{t-2}\right].\label{eq:t=0to2}
\end{eqnarray}
An additional element is added into $\Theta_t$ only when it is not originally present in $\Theta_{t-1}$, which is with probability $\sum_{\theta\notin\Theta_{t-1}}\tilde{p}_\theta$. Under this condition that an element has been added, each $\theta\notin\Theta_{t-2}$ occurs with probability $\tilde{p}_\theta/\sum_{\theta\notin\Theta_{t-2}}\tilde{p}_\theta$, which reduces \eqref{eq:t=0to2} to the following
\begin{equation}
E[N_t|\Theta_{t-2}]=|\Theta_{t-2}|+2\sum_{\theta\notin\Theta_{t-2}}\tilde{p}_\theta-\sum_{\theta\notin\Theta_{t-2}}\tilde{p}_\theta^2.
\end{equation}
Continuing further, for a general $T$
\begin{equation}
E[N_T|\Theta_0]=|\Theta_0|-\sum_{k=1}^T\sum_{\theta\notin\Theta_0}(-1)^k{T \choose k}\tilde{p}_\theta^k.
\end{equation}
\end{proof}

Some points to be noted from Proposition~\ref{prop:gen} include the following:
\begin{itemize}
\item If rare elements are already known, the convergence is faster, cf.~\eqref{eq:final}.
\item For uniformly distributed prior pmf and perfect observations, 
\begin{equation}
E[\rho_T|\rho_0]=1-(1-\rho_0)\left(1-\frac{1}{M}\right)^T\label{eq:rho},
\end{equation}
where $\rho_t=N_t/M$.
\item Rate of knowledge discovery (rate of convergence of $E[\rho_t|\rho_0]$ to 1) is monotonically increasing in $\rho_0$.
\item Growth rate of knowledge discovery is exponential in $T$: \begin{equation}
\lim_{T\to\infty}\frac{1}{T}\log(1-E[\rho_T|\rho_0])=-\log\left(1-\frac{1}{M}\right).\label{eq:rate}
\end{equation}
Therefore, the convergence is faster as $M$ decreases.
\end{itemize}

To gain some further insight, we perform simulations.  The following parameters are used: $$\Theta=\{1,\ldots,M\},$$ $$p_\theta(m)={M-1 \choose m-1}p^{m-1}(1-p)^{M-m}$$ for $m=1,\ldots,M$, and the noise channel is an $M$-ary symmetric channel with crossover probability $r$. Fig.~\ref{fig:gen_corroborate} shows the numerical and simulation results that corroborates our analytical expressions. The simulation results are averaged over $N_{mc}=500$ Monte Carlo runs. As can be observed from Fig.~\ref{fig:gen_init}, the performance depends on the initial set $\Theta_0$ and the performance (in terms of discovery rate) is higher (and also better than the uniform case) when the initial elements are the least probable ones ($\Theta_0=[3 ,4]$ in this example). This is intuitive since if the explorer already knows the rare elements, she can discover others at a faster rate.

\begin{figure}
  \centering
    \includegraphics[height=2in, width=3.25in]{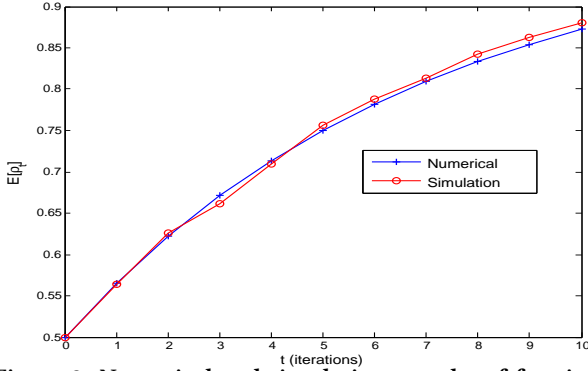}
    \vspace{-0.5cm}
    \caption{Numerical and simulations results of fraction of element set with iterations ($M=4$, $\Theta_0=[1, 2]$, $p=0.2$, and $r=0.1$).}
    \label{fig:gen_corroborate}
\end{figure}

\begin{figure}
  \centering
    \includegraphics[height=2in, width=3.25in]{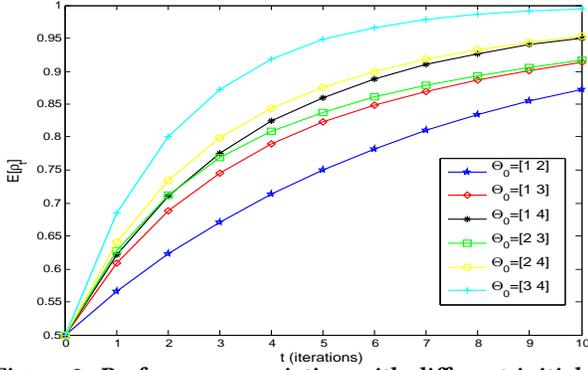}
    \vspace{-0.5cm}
    \caption{Performance variation with different initial sets $\Theta_0$ ($M=4$, $p=0.2$, and $r=0.1$).}
    \label{fig:gen_init}
\end{figure}

Fig.~\ref{fig:gen_q} shows the performance with varying noise value $r$ where higher $r$ implies noisier data produced by the technology. When the explorer is already aware of the most probable elements (refer to Fig.~\ref{fig:gen_q}), noise has the positive effect of helping the explorer in discovering the less probable elements at a faster rate, implying that \emph{noisy technology can help in discovering new elements}.  Such an observation is conceptually related to the phenomenon of noise-enhanced signal processing \cite{ChenVV2014}, where the addition of noise can improve the performance of some non-linear suboptimal inference systems. Further analysis is needed to mathematize this relation and understand the \emph{optimal} noise to yield best performance. 

\begin{figure}
  \centering
    \includegraphics[height=2in, width=3.25in]{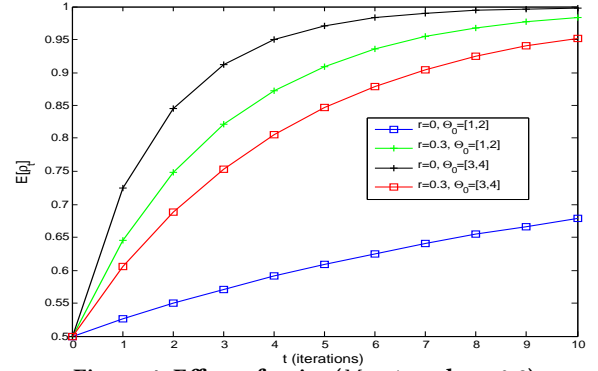}
    \vspace{-0.5cm}
    \caption{Effect of noise ($M=4$, and $p=0.2$).}
    \label{fig:gen_q}
\end{figure}

\begin{figure}
  \centering
    \includegraphics[height=2in, width=3.25in]{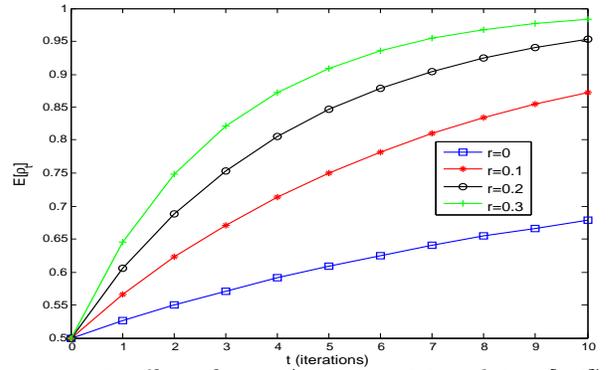}
    \vspace{-0.5cm}
    \caption{Effect of noise ($M=4$, $p=0.2$, and $\Theta_0=[1, 2]$).}
    \label{fig:gen_q1}
\end{figure}

Fig.~\ref{fig:gen_p} shows how the performance varies with prior distribution. 
The effect of prior distribution on the performance is characterized by \eqref{eq:final}, but has been omitted for brevity.%

\begin{figure}
  \centering
    \includegraphics[height=2in, width=3.25in]{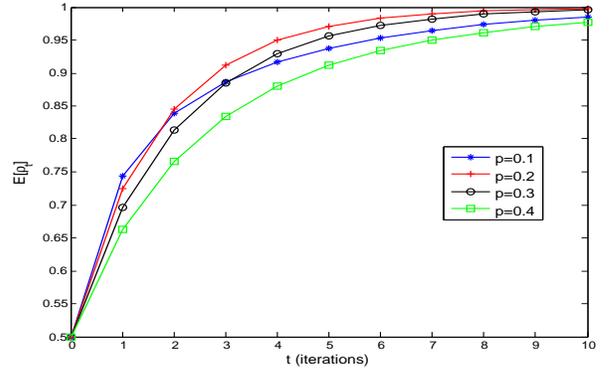}
    \vspace{-0.5cm}
    \caption{Effect of different prior distribution ($M=4$, $r=0$, and $\Theta_0=[3, 4]$).}
    \label{fig:gen_p}
\end{figure}
For the simple case, we evaluate the performance with varying $M$ and $\rho_0$ in Fig.~\ref{fig:plot_M} and Fig.~\ref{fig:plot_0}, respectively.

\begin{figure}
  \centering
    \includegraphics[height=2in, width=3.25in]{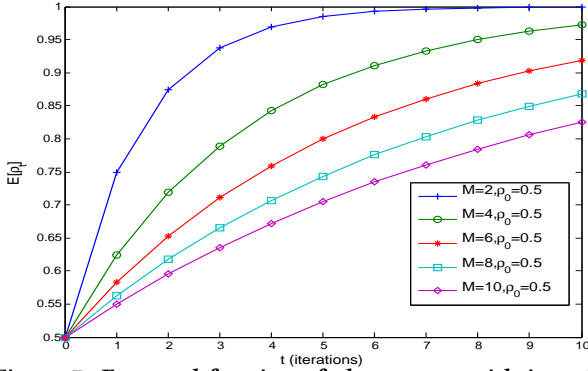}
    \vspace{-0.5cm}
    \caption{Expected fraction of element set with iterations ($\rho_0=0.5$, varying $M$)}
    \label{fig:plot_M}
\end{figure}

\begin{figure}
  \centering
    \includegraphics[height=2in, width=3.25in]{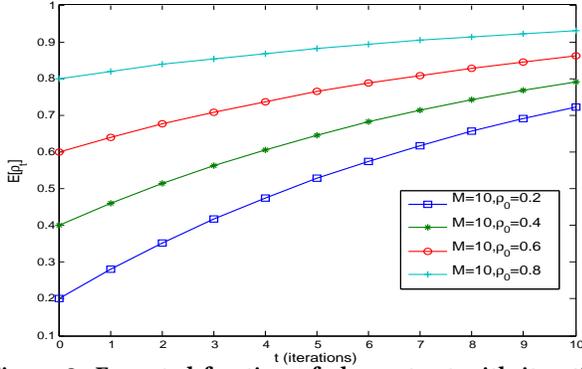}
    \vspace{-0.5cm}
    \caption{Expected fraction of element set with iterations ($M=10$, varying $\rho_0$)}
    \label{fig:plot_0}
\end{figure}

\section{Quality of Knowledge Base}
\label{sec:quality}
In many settings, not only is the discovery of a new element important but so is the quality of that element. For example, certain new metals are more useful in metallurgy than others.  Let each element $\theta \in \Theta$ have a corresponding quality factor $q_\theta$ which determines the value of discovering $\theta$. Let $\mathbf{q}=[q_1,\ldots,q_M]$ be  the quality vector and $Q_t=\sum_{\theta\in\Theta_t}q_\theta$ denote the quality of the elements discovered after time $t$. 

\begin{definition}
The $k$th-order quality-prevalence function $D^k_{\Theta_0}(\mathbf{\tilde{p}},\mathbf{q})$ is defined as the inner product between $\mathbf{q}$ (quality vector) and element-wise $k$th power of $\mathbf{\tilde{p}}$ (probability vector) over the subset $\Theta_0^C$ ($\Theta\backslash\Theta_0$): $$D^k_{\Theta_0}(\mathbf{\tilde{p}},\mathbf{q}):=\sum_{\theta\notin\Theta_0}q_\theta \tilde{p}^k_\theta.$$
\end{definition}
These functions evaluate the degree of alignment between the probability vector and the quality vector.

\begin{proposition}
\label{prop:quality}
For technology-aided discovery, the expected quality of elements discovered after $T$ iterations is given by
\begin{equation}
E[Q_T|\Theta_0]=Q_0-\sum_{k=1}^T(-1)^k{T \choose k}D^k_{\Theta_0}(\mathbf{\tilde{p}},\mathbf{q}).\label{eq:quality}
\end{equation}
\end{proposition}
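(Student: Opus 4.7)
The approach parallels the proof of Proposition~\ref{prop:gen}, generalizing from an element-count recursion to a quality-weighted recursion. The plan is to set up a one-step update for $Q_t$ conditioned on $\Theta_{t-1}$, unroll it back to $\Theta_0$, and recognize the resulting alternating binomial sum in terms of the quality-prevalence functions $D^k_{\Theta_0}(\mathbf{\tilde{p}},\mathbf{q})$.

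First I would write
\begin{equation*}
Q_t = Q_{t-1} + q_{\hat{\theta}_t}\,\mathbf{1}\!\left[\hat{\theta}_t\notin\Theta_{t-1}\right],
\end{equation*}
and take the conditional expectation over $\hat{\theta}_t\sim\mathbf{\tilde{p}}$, obtaining $E[Q_t\mid\Theta_{t-1}] = Q_{t-1} + \sum_{\theta\notin\Theta_{t-1}}q_\theta\tilde{p}_\theta$. Then, mirroring \eqref{eq:t=2} and \eqref{eq:t=0to2}, I would decompose $\sum_{\theta\notin\Theta_{t-1}}q_\theta\tilde{p}_\theta = \sum_{\theta\notin\Theta_{t-2}}q_\theta\tilde{p}_\theta - \sum_{\theta\in\Theta_{t-1}\setminus\Theta_{t-2}}q_\theta\tilde{p}_\theta$ and take expectation over $\Theta_{t-1}$ given $\Theta_{t-2}$. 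Using the same observation as in Proposition~\ref{prop:gen}, namely that conditional on a new element being acquired at step $t-1$ each candidate $\theta\notin\Theta_{t-2}$ is the one added with probability $\tilde{p}_\theta/\sum_{\theta'\notin\Theta_{t-2}}\tilde{p}_{\theta'}$, the new-element term collapses to $\sum_{\theta\notin\Theta_{t-2}}q_\theta\tilde{p}_\theta^{2} = D^2_{\Theta_{t-2}}(\mathbf{\tilde{p}},\mathbf{q})$.

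Iterating this substitution one layer at a time produces the same combinatorial structure as in Proposition~\ref{prop:gen}, but with the pure-probability sums $\sum_{\theta\notin\Theta_0}\tilde{p}_\theta^k$ replaced throughout by their quality-weighted counterparts $D^k_{\Theta_0}(\mathbf{\tilde{p}},\mathbf{q})$. A straightforward induction on $T$ then yields \eqref{eq:quality}, and specializing $q_\theta\equiv 1$ recovers Proposition~\ref{prop:gen} as a consistency check. The main obstacle I anticipate is not the underlying algebra, which is essentially a template transfer from Proposition~\ref{prop:gen}, but the careful bookkeeping required to verify at each level of the recursion that exactly one factor of $\tilde{p}_\theta$ carries the quality weight $q_\theta$ while the remaining $k-1$ factors originate from the conditional draw probabilities; this asymmetry is precisely what the definition of $D^k_{\Theta_0}$ is designed to encode.
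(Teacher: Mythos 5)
Your proposal is correct and follows essentially the same route as the paper: the paper's proof likewise writes the quality-weighted one-step recursion $E[Q_t\mid\Theta_{t-1}]=Q_{t-1}+\sum_{\theta\notin\Theta_{t-1}}q_\theta\tilde{p}_\theta$ and then unrolls it by the same template as Proposition~\ref{prop:gen}, which is exactly the iteration you describe. Your added bookkeeping (the conditional-draw argument placing one quality-weighted factor of $\tilde{p}_\theta$ against $k-1$ plain factors, and the $q_\theta\equiv 1$ consistency check) only makes explicit what the paper leaves implicit.
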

\begin{proof}
The proof is similar to that of Prop.~\ref{prop:gen} with the following weighted form
\begin{eqnarray}
E\left[Q_t|\Theta_{t-1}\right]&=&Q_{t-1}\sum_{\theta\in\Theta_{t-1}}\tilde{p}_{\theta}+\sum_{\theta\notin\Theta_{t-1}}(Q_{t-1}+q_\theta)\tilde{p}_{\theta}\nn\\
&=&Q_{t-1}+\sum_{\forall\theta}q_{\theta}\tilde{p}_\theta-\sum_{\theta\in\Theta_{t-1}}q_\theta\tilde{p}_{\theta}.
\end{eqnarray}

Therefore, for a general $T$, we get the desired result
\begin{equation}
E\left[Q_T|\Theta_0\right]=Q_0-\sum_{k=1}^T\sum_{\theta\notin\Theta_0}(-1)^k{T \choose k}q_\theta\tilde{p}_\theta^k.
\end{equation}
\end{proof}

The two extreme possibilities are when $\mathbf{q}$ is aligned in the same direction as probability vector $\mathbf{p}$ or is in the opposite direction. For the previous example, Fig.~\ref{fig:quality} confirms our understanding in showing the quality of the discovered elements for these two extreme cases.

\begin{figure}
  \centering
    \includegraphics[height=2in, width=3.25in]{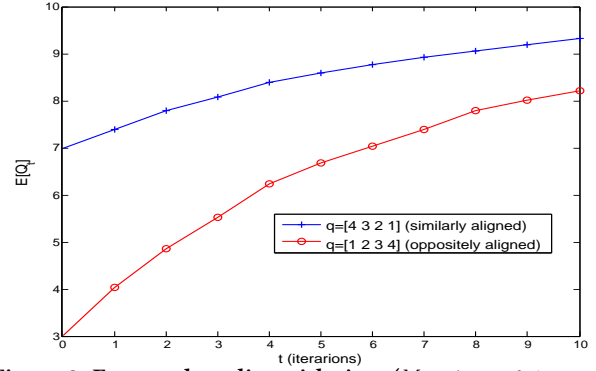}
    \vspace{-0.5cm}
    \caption{Expected quality with time ($M=4$, $r=0.1$, $p=0.2$, and $\Theta_0=[1, 2]$).}
    \label{fig:quality}
    \end{figure}

\section{Connecting to Empirical Observations}
\label{sec:emp}
As mentioned before, Eurekometrics studies the nature of discovery \cite{ArbesmanL2010,Arbesman2011,ArbesmanC2011}. Empirical observations in this field have shown the discovery of scientific output to follow an exponential increase, or more properly, a logistic growth curve. In our generative model, the ease of discovery corresponds to the pmf $\mathbf{p}$ that denotes the difficulty associated with discovering an element. When all elements are of equal difficulty, empirical results suggest a logistic curve for the number of discovered elements $D_T$ \cite{Arbesman2011}:
\begin{eqnarray}
D_T&\approx&\frac{K}{1+Ae^{-r_0T}}
\end{eqnarray}
where $K$ is the limiting size, $A$ is the fitting constant, and $r_0$ is the growth rate of scientific output.
For a small value of $A$, this can be approximated as
\begin{equation}
D_T\approx K(1-Ae^{-r_0T})\iff\rho_T^{emp}\approx1-Ae^{-r_0T}\label{eq:rho_emp}
\end{equation}
where $\rho_T^{emp}=D_T/K$ is the fraction of discovered elements. Observe that \eqref{eq:rho_emp} matches \eqref{eq:rho} derived for the generative model of the discovery process. 

\section{Discussion}
\label{sec:disc}
The discovery problem considered herein is a novel generalization of the coupon collector's problem \cite{Kobza2007,Boneh1997,Read1998,BerenbrinkS2009}. Most prior results for weighted coupon collector's problem (or coupon collector problem in general) consider the expected number of iterations required to collect all coupons, whereas here we evaluate the average number of coupons collected after $T$ iterations. Also, most of the existing results of coupon collectors problem are approximations/asymptotic order results. Moreover, to the best of our knowledge, there are no previous results on the coupon collector's problem in the presence of noisy observations or where each coupon is of different quality.

The generative model for the discovery problem has also been simulated and several intuitive observations have been made. The coupon collector's model helps interpret several empirical observations made in Eurekometrics \cite{ArbesmanC2011} and provides further insights into the discovery process, e.g.\ how discoveries become more difficult over time.

\bibliographystyle{ACM-Reference-Format}
\bibliography{HMI_lib,abrv,conf_abrv} 

\end{document}